\newif\ifreport\reportfalse
\newcommand{\ignore}[1]{}
\newtheorem{lemma}{Lemma}
\newtheorem{theorem}{Theorem}
\theoremstyle{definition}
\begin{document}

\title{Samples through Queues: Remote Estimation and Age of Information}
\title{Sampling for Remote Estimation through Queues}
\title{Performance Bounds for Sampling and Remote Estimation of Gauss-Markov Processes over a Noisy Channel with Random Delay} 

\IEEEoverridecommandlockouts
\author{Tasmeen Zaman Ornee and Yin Sun\\ 
Dept. of ECE,  Auburn University, Auburn, AL\\
%
\thanks{This work was supported in part by NSF grant CCF-1813050, ONR grant N00014-17-1-2417, and ARO grant W911NF-21-1-0244}
}
\maketitle
\begin{abstract}
In this study, we generalize a problem of sampling a scalar Gauss Markov Process, namely, the Ornstein-Uhlenbeck (OU) process, where the samples are sent to a remote estimator and the estimator makes a causal estimate of the observed \emph{real-time} signal. In recent years, the problem is solved for stable OU processes. We present solutions for the optimal sampling policy that exhibits a smaller estimation error for both stable and unstable cases of the OU process along with a special case when the OU process turns to a Wiener process. The obtained optimal sampling policy is a threshold policy. However, the thresholds are different for all three cases. Later, we consider additional noise with the sample when the sampling decision is made beforehand. The estimator utilizes noisy samples to make an estimate of the current signal value. The mean-square error ($\mathsf{mse}$) is changed from previous due to noise and the additional term in the $\mathsf{mse}$ is solved which provides performance upper bound and room for a pursuing further investigation on this problem to find an optimal sampling strategy that minimizes the estimation error when the observed samples are noisy. Numerical results show performance degradation caused by the additive noise.
\end{abstract}

\begin{IEEEkeywords}
Ornstein-Uhlenbeck process, sampling policy, threshold policy, noisy sample.
\end{IEEEkeywords}
\section{Introduction}

\ignore{Many real-time applications, such as state estimation, tracking, and decision-making require fresh and timely updates about the system state. 


In recent years, to measure the freshness of state updates, the concept of \emph{Age of Information} (AoI) has received significant attention from the research community due to its extensive importance in real-time systems \cite{KaulYatesGruteser-Infocom2012, 139341, Altman2019}. AoI is expressed as a time difference between the current time and the generation time of the latest received sample.

In practice, the system states are usually in the form of a signal $X_t$, such as interest rate, currency exchange rate, price of the stock market, the trajectory of a flying UAV. These real-time signals are random, sometimes the variations are small and later may become huge.


Hence, the time difference is not sufficient to distinguish the variation of the signal state and the update policy that minimizes AoI does not produce a smaller estimation error.} The problem of sampling an Ornstein-Uhlenbeck (OU) process is recently addressed in \cite{Ornee_TON} and another problem of sampling a Wiener process in \cite{Wiener_TIT}. However, the optimal sampling policy provided in \cite{Ornee_TON} is only for the stable scenario. In practice, real-time applications of OU processes consider both stable and unstable cases \cite{app_un}. Therefore, a sampling problem that considers only the stable scenario is insufficient for practical and more dynamical systems, and a generalization of this problem that considers both stable and unstable cases is necessary. 

Moreover, a real-time system often consists of noise along with the signal process. Therefore, the analysis based on noisy observation of samples to minimize signal estimation error is practically much more important in real-time networked control and communication systems. In this paper, we generalize a sampling problem of a scalar Gauss-Markov process, named the OU process by considering both stable and unstable scenarios. Later on, we consider noisy samples of OU process and compute the $\mathsf{mse}$ from which we establish estimation performance bounds of $\mathsf{mse}$. The optimal sampling policy for noisy samples is not provided in this work but will be considered in our future study.

The OU process is defined as the solution to the following stochastic differential equation (SDE) \cite{PhysRev.36.823,Doob1942}
\begin{align}\label{eq_SDE}
dX_t =\theta (\mu-X_t) dt + \sigma dW_t, 
\end{align}
where $\mu$,  $\theta$,  and $\sigma >0$ are parameters and $W_{t}$ represents a Wiener process. In case of stable OU process, $\theta > 0$ \cite{Ornee_TON}. In \eqref{eq_SDE}, if $\theta \to 0$, and $\sigma =1$, $X_t$ reduces to a Wiener process. If $\theta < 0$, then $X_t$ becomes an unstable OU process. Examples and properties of OU processes are explained in \cite{Ornee_TON}.

First, we aim to find an optimal sampling strategy that minimizes the $\mathsf{mse}$. The samples of the OU process pass through a channel in first-come, first-serve (FCFS) strategy. A remotely located estimator utilizes these causally received samples to make an estimate $\hat{X_t}$ of $X_t$. We obtain lower bound of $\mathsf{mse}$ in the absence of any additional noise in the system. Second, our goal is to find the expression of $\mathsf{mse}$ with the presence of noise in the system. This analysis provides an upper bound of $\mathsf{mse}$ when the estimator receives noisy samples. We summarize the contributions of this paper as follows:

\begin{itemize}

\item The optimal sampling problem in the absence of noise is formulated and the solved optimal sampling policy is a threshold policy on \emph{instantaneous estimation error}. The structure of the thresholds $v(\beta)$ of a parameter $\beta$ are different for the three cases: $\theta>0$ (Stable OU process), $\theta=0$ (Wiener process), and $\theta<0$ (Unstable OU process). The value of $\beta$ is equal to the optimum value of the time-average expected estimation error. The computation of $\beta$ remains the same irrespective of the signal models.

\item Further, we consider noisy samples and obtain an explicit expression for $\mathsf{mse}$. From the expression, we establish a performance upper bound of $\mathsf{mse}$. 

\item Our results hold for general \emph{i.i.d.} transmission time distributions of the queueing server with a finite mean.

\end{itemize}


\subsection{Related Work}

The results in this paper are tightly connected to the area of remote estimation, e.g., \cite{Nuno2011,Hajek2008,Rabi2012,nayyar2013,Basar2014,GAO201857,ChakravortyTAC2020,Wiener_TIT,Ornee_TON,GuoISIT2020,TsaiINFOCOM2020,Arafa_2020}. Optimal sampling policy of Wiener processes with a zero channel delay was studied in \cite{Rabi2012,Basar2014}, whereas we consider random \emph{i.i.d.} channel delay. A discrete-time optimal stopping problem was solved by using Dynamic programming in \cite{Rabi2012} to find the optimal sampling policy of OU processes. In \cite{Ornee_TON}, an optimal sampler of stable OU processes is obtained analytically where the sampling is suspended when the server is busy and is reactivated once the server becomes idle. The optimal sampling policy for Wiener processes in \cite{Wiener_TIT} and stable OU processes in \cite{Ornee_TON} is a special case of ours. Remote estimation of Wiener processes with random two-way delay was considered in \cite{TsaiINFOCOM2020}.

In \cite{GuoISIT2020}, a jointly optimal sampler, quantizer, and estimator design were found for a class of continuous-time Markov processes under a bit-rate constraint. In \cite{Arafa_2020}, the quantization and coding schemes on the estimation performance are studied. We consider noisy channels with random delay to establish performance bounds. A recent survey on remote estimation systems was presented in \cite{jog2019channels}.


\section{Model and Problem Formulation} \label{model} 

\begin{figure}
\vspace{-0.3cm}
\centering
\includegraphics[width=8.7cm]{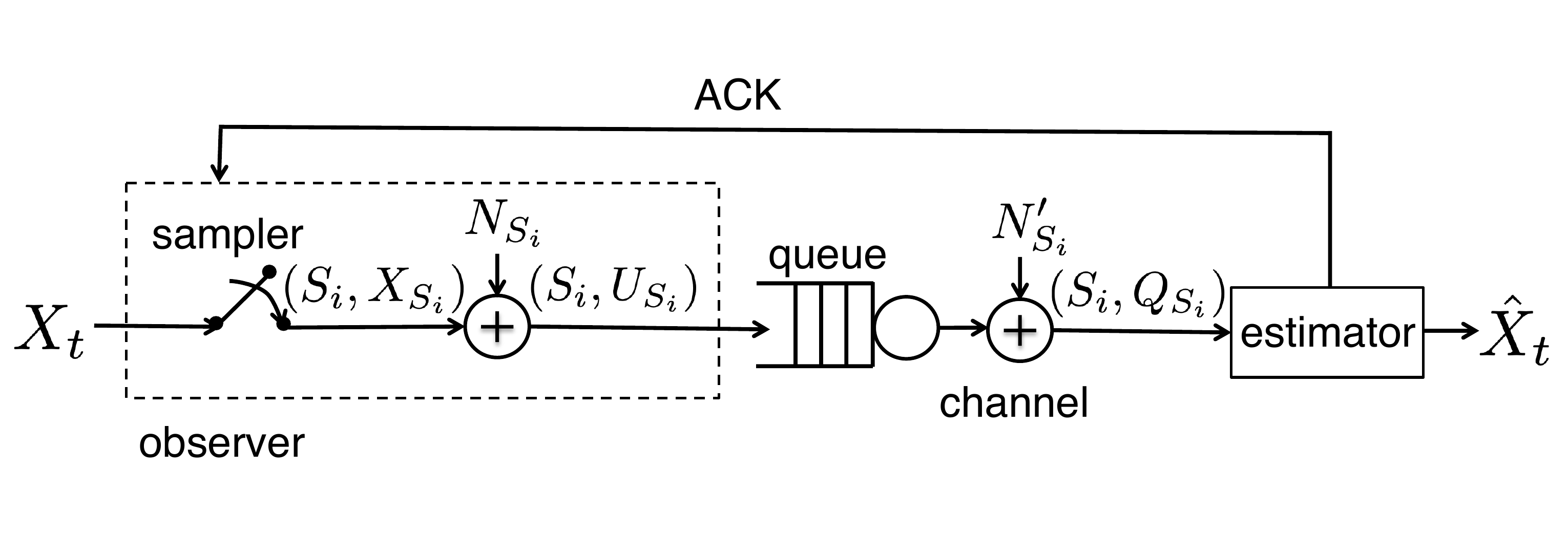}  
\caption{System model.}\vspace{-0.0cm}
\label{fig_model}
\end{figure}  

\subsection{System Model}
We consider a continuous-time remote estimation system that is illustrated in Fig. \ref{fig_model}, where an observer takes samples from an OU process $X_t$. After sampling, additional noises from the sampler and the channel are added to the samples. Then, the noisy samples are sent to the estimator. The channel is modeled as a single-server FIFO queue with \emph{i.i.d.} service times. The samples undergo random service times in the channel due to fading, interference, congestions, etc. We also consider that at a time, only one sample can be delivered through the channel.

The operation of the system starts at time instant $t=0$. The generation time of the $i$-th sample is $S_i$, which satisfy $S_i \leq S_{i+1}$ for all $i$. Then, $i$-th sample undergoes a random service time $Y_i$, and is delivered to the estimator at time $D_i$, where  $S_i +Y_i \leq D_i$, $D_i +Y_{i+1}\leq D_{i+1}$, and $0<\mathbb{E}[Y_i] < \infty$ hold for all $i$. The $i$-th sample packet $(S_i, X_{S_i})$ contains the sample value $X_{S_i}$ and its sampling time $S_i$. Suppose that after sampling, noise $N_{S_i}$ is being added to the sample $X_{S_i}$ and  the noisy observation of the sample $X_{S_i}$ is denoted by $U_{S_i}$. Hence, 
\begin{align} \label{quant_sample}
U_{S_i} = X_{S_i} + N_{S_i}, 
\end{align}
where $N_{S_i}$ is the additive noise with zero mean and variance $b_1$. Each sample packet $(S_i, U_{S_i})$ contains the sampling time $S_i$ and the noisy sample $U_{S_i}$. If channel noise $N'_{S_i}$ with zero mean and variance $b_2$ is added to the sample during its transmission through the channel, then the sample value becomes
\begin{align} \label{channel_noise}
Q_{S_i} = U_{S_i} + N'_{S_i}.
\end{align}

Initially, at $t=0$, the state of the system is assumed to hold $S_0 = 0$, and  $D_0 = Y_0$. 
The initial state of the OU process $X_0$ is a finite constant. The process parameters $\mu$, $\theta$, and $\sigma$ in \eqref{eq_SDE} are known at both the sampler and estimator. 

Let, the idle/busy state of the server at time $t$ is denoted by $I_t \in \{0,1\}$. We also assume that an acknowledgement is immediately sent back to the sampler whenever a sample is delivered and this operation has zero delay. By this assumption, the sampler is aware of the idle/busy state of the server and the available information at time $t$ can be given by $\{X_s, I_s: 0\leq s\leq t\}$.

\subsection{Sampling Policies}

The sampling time $S_i$ is a finite stopping time with respect to the filtration $\{{\mathcal{F}_{t}^{+}}, t \geq 0\}$ (a non-decreasing and right-continuous family of $\sigma$-fields) of the information that is available at the sampler such that \cite{Durrettbook10}
\begin{align} \label{stopping_time}
\{S_i \leq t\} \in \mathcal{F}_{t}^{+}, \forall t \geq 0.
\end{align}

Let $\pi = (S_1, S_2, . . . )$ denote a sampling policy and $\Pi$ denote the set of \emph{causal} sampling policies that satisfy two conditions: (i) Each sampling policy $\pi\in\Pi$ satisfies \eqref{stopping_time} for all $i$. (ii) The sequence of inter-sampling times $\{T_i = S_{i+1}-S_i, i=0,1,\ldots\}$ forms a \emph{regenerative process} \cite[Section IIB]{Ornee_TON}: An increasing sequence  $0\leq {l_1}<l_2< \ldots$ of  almost surely finite random integers exists such that the post-${l_k}$ process $\{T_{l_k+i}, i=0,1,\ldots\}$ is independent of the pre-$l_k$ process $\{T_{i}, i=0,1,\ldots, l_k-1\}$ and has same distribution as the post-${l_0}$ process $\{T_{l_0+i}, i=0,1,\ldots\}$; We further assume that $\mathbb{E}[{l_{k+1}}-{l_k}]<\infty$, $\mathbb{E}[S_{l_{1}}]<\infty$, and $0<\mathbb{E}[S_{l_{k+1}}-S_{l_k}]<\infty, ~k=1,2,\ldots$ 

\subsection{MMSE Estimator}

In this section, we provide the MMSE estimator for noisy samples of the OU process.


By using the expression of OU process for stable scenario \cite[Eq. (3)]{Maller2009} and the strong Markov property of the OU process \cite[Eq. (4.3.27)]{Peskir2006}, a solution to \eqref{eq_SDE} for $t \in [S_i,\infty)$ given by the following three cases:
\begin{align}\label{eq_solution}
\!\! \!\! X_t  = \left\{\!\! \begin{array}{l l} X_{S_i} e^{-\theta (t-S_i)} + \mu\big[1-e^{-\theta (t-S_i)} \big]\\ 
 + \frac{\sigma}{\sqrt{2\theta}}e^{-\theta (t-S_i)} W_{e^{2 \theta (t-S_i)}-1},
& \text{ if }~ \theta > 0,\\
{\sigma} W_t, & \text{ if }~ \theta = 0,\\
X_{S_i} e^{-\theta (t-S_i)} + \mu\big[1-e^{-\theta (t-S_i)} \big] \\
+ \frac{\sigma}{\sqrt{-2\theta}}e^{-\theta (t-S_i)} W_{1 - e^{2 \theta (t-S_i)}}, & \text{ if }~ \theta < 0.
\end{array}\right. \!\!\!\!\!\!
\end{align}

The estimator uses causally received samples to formulate an estimate $\hat X_t$ of the real-time signal value $X_t$ at any time $t\geq0$. The available information at the estimator has two parts: (i) $M_t = \{(S_i, Q_{S_i}, D_i): D_i \leq t\}$, which contains the sampling time $S_i$, noisy sample value $Q_{S_i}$, and delivery time $D_i$ of the samples that have been delivered by time $t$ and (ii) no sample has been received after the last  delivery time $\max\{D_i: D_i\leq t\}$. Similar to \cite{Rabi2012,SOLEYMANI20161,Wiener_TIT, Ornee_TON}, we assume that the estimator neglects the second part of information. Then, as shown in \cite{Arafa_2020}, the MMSE estimator for $t\in[D_i,D_{i+1}),~i=0,1,2,\ldots$ for all of the cases in \eqref{eq_solution} is given as follows
\begin{align} \label{mse_quant_u}
\!\! \!\! \hat{X}_{t}  = &\mathbb{E} [X_t | M_t] \nonumber\\
=& Q_{S_i} e^{-\theta (t-S_i)} + \mu\big[1-e^{-\theta (t-S_i)} \big].
\end{align}

\subsection{Performance Metric}

We evaluate the performance of remote estimation by the time-average mean square error which is expressed as follows:
\begin{align} \label{mse_new}
\mathsf{mse} = \limsup_{T\rightarrow \infty}\frac{1}{T}\mathbb{E}\left[\int_0^{T} (X_t - \hat X_t)^2dt\right].
\end{align}

A lower bound of \eqref{mse_new} can be obtained when the additive noises are not considered ($N_{S_i} = 0, N'_{S_i} = 0$). On the other hand, an upper bound can be found by taking both the noises into account. Moreover, we formulate the following optimal sampling problem that minimizes the time-average mean-squared estimation error over an infinite time-horizon when no noise is considered. 
\begin{align}\label{eq_DPExpected}
{\mathsf{mse}}_{\text{opt-wn}}=
\min_{\pi\in\Pi}~& \limsup_{T\rightarrow \infty}\frac{1}{T}\mathbb{E}\left[\int_0^{T} (X_t - \hat X_t)^2dt\right],
\end{align}
where ${\mathsf{mse}}_{\text{opt-wn}}$ is the optimum value of \eqref{eq_DPExpected} without noise.
We do not provide the optimal sampling policy in the presence of noises in this study, but it will be considered in our future work.

\section{Main Results}\label{sec_main_result}

In this section, we first present the lower bounds for $\mathsf{mse}$ in \eqref{mse_new} for different conditions on the OU process parameter $\theta$. Second, we provide the optimal sampling policy for minimizing the expected estimation error defined in \eqref{eq_DPExpected}. Later, we present upper bound for $\mathsf{mse}$ in \eqref{mse_new}.


\subsection{Lower Bounds for \emph{$\mathsf{mse}$}}

Let us consider an OU process with initial state $O_0 = 0$ and parameter $\mu = 0$, which can be expressed as
\begin{align}\label{eq_signals}
\!\! \!\! O_t  = \left\{\!\! \begin{array}{l l}\frac{\sigma}{\sqrt{2 \theta}} e^{-\theta t} W_{e^{2 {\theta} t} - 1},
& \text{ if }~ \theta > 0,\\
{\sigma} W_t, & \text{ if }~ \theta = 0,\\
\frac{\sigma}{\sqrt{-2 \theta}} e^{-\theta t} W_{1-e^{2 {\theta} t}}, & \text{ if }~ \theta < 0.
\end{array}\right. \!\!\!\!\!\!
\end{align}
Before presenting the optimal sampler without noise, let us define the following parameter:
{\begin{align}\label{eq_mse_Yi}
&\mathsf{mse}_{Y_i}=\left\{\!\! \begin{array}{l l}\frac{\sigma^2}{2\theta} \mathbb{E} [1 - e^{{-2 \theta {Y_i}}}],
& \text{ if }~ \theta \neq 0,\\
{\sigma^2} \mathbb{E} [Y_i], & \text{ if }~ \theta = 0,
\end{array}\right. \!\!\!\!\!\!
\end{align}
where {$\mathsf{mse}_{Y_i}$} is the lower bound of $\mathsf{{mse}}$.
We will also need to use the following two functions 
\begin{align}
& G(x) = \thinspace \frac{\sqrt{\pi}}{2} \frac{e^{{x^2}}}{x}  \thinspace {\text{erf}}(x), ~x\in[0,\infty), \label{eq_g2}\\
& K(x) = \thinspace \frac{\sqrt{\pi}}{2} \frac{e^{-{x^2}}}{x}  \thinspace {\text{erfi}}(x), ~x\in[0,\infty), \label{K_x}
\end{align}
where if $x=0$, both $G(x)$ and $K(x)$ are defined as their right limits $G(0)=\lim_{x\rightarrow 0^+}G(x)= 1$, and $K(0)=\lim_{x\rightarrow 0^+}K(x)= 1$. Furthermore, $\text{erf}(\cdot)$ and $\text{erfi}(\cdot)$ are the error function and  imaginary error function respectively, defined as
\begin{align}
{\text{erf}}(x) = \frac{2}{\sqrt \pi} \int_0^x e^{-t^2} dt, 
{\text{erfi}}(x) = \frac{2}{\sqrt \pi} \int_0^x e^{t^2} dt. \label{eq_erfi}
\end{align}
Note that $G(x)$ is strictly increasing on $x\in[0, \infty)$ \cite{Ornee_TON}, whereas $K(x)$ is strictly decreasing on $x\in[0, \infty)$. Hence, their inverses $G^{-1} (\cdot)$ and $K^{-1} (\cdot)$  are properly defined.

First, we consider that the system has no noise, i.e., $N_{S_i} = 0$ and $N'_{S_i} = 0$. Therefore, from \eqref{quant_sample} and \eqref{channel_noise}, we get, $X_{S_i} = U_{S_i} = Q_{S_i}$. Then, the following theorem illustrates that the optimal sampling policy is a threshold policy and the threshold is found for all the three cases of the OU process parameter $\theta$. 

\begin{theorem}\label{thm1}
If the $Y_i$'s are i.i.d. with $0<\mathbb{E}[Y_i] < \infty$, then $(S_1(\beta),S_2(\beta),\ldots)$  with a parameter $\beta$ is an optimal solution to \eqref{eq_DPExpected}, where 
\begin{align}\label{eq_opt_solution}
S_{i+1} (\beta)= \inf \left\{ t \geq D_i(\beta):\! \big|X_t - \hat X_t\big| \!\geq\! {v}(\beta)\right\},
\end{align}
$D_i (\beta)= S_i (\beta)+ Y_i$, and ${v}(\beta)$ is given by
\begin{align}\label{eq_threshold_st}
\!\! \!\! v(\beta)  = \left\{\!\! \begin{array}{l l}\frac{\sigma}{\sqrt{\theta}} G^{-1} \left(\frac{\frac{\sigma^2}{2 \theta} - \mathsf{mse}_{Y_i}}{\frac{\sigma^2}{2 \theta} - \beta}\right),
& \text{ if }~ \theta > 0,\\
\sqrt{3(\beta - \mathbb{E} [Y_i])}, & \text{ if }~ \theta = 0,\\
\frac{\sigma}{\sqrt{-\theta}} K^{-1} \left(\frac{\frac{\sigma^2}{2 \theta} - \mathsf{mse}_{Y_i}}{\frac{\sigma^2}{2 \theta} - \beta}\right), & \text{ if }~ \theta < 0,
\end{array}\right. \!\!\!\!\!\!
\end{align}
where $G^{-1}(\cdot)$ is the inverse function of $G(\cdot)$ in \eqref{eq_g2}, $K^{-1}(\cdot)$ is the inverse function of $K(\cdot)$ in \eqref{K_x}, and $\beta$ is the unique root  of
\begin{align}\label{thm1_eq22}
\!\! {\mathbb{E}\left[\int_{D_i(\beta)}^{D_{i+1}(\beta)}\! (X_t-\hat X_t)^2dt\right]} - {\beta} {\mathbb{E}[D_{i+1}(\beta)\!-\!D_i(\beta)]} = 0.\!\!
\end{align} 
The optimal objective value to \eqref{eq_DPExpected} is then given by \emph{
\begin{align}\label{thm1_eq23}
{\mathsf{mse}}_{\text{opt-wn}} = \frac{\mathbb{E}\left[\int_{D_i(\beta)}^{D_{i+1}(\beta)}\! (X_t-\hat X_t)^2dt\right]}{\mathbb{E}[D_{i+1}(\beta)\!-\!D_i(\beta)]}.
\end{align} }
\end{theorem}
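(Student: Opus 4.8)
The plan is to combine renewal--reward theory with Dinkelbach's fractional-programming transform and then solve a per-sample optimal stopping problem, in the spirit of \cite{Ornee_TON,Wiener_TIT}. First I would exploit the regenerative structure of $\{T_i\}$ built into $\Pi$ to rewrite the time-average cost in \eqref{eq_DPExpected}: along the regeneration epochs $l_k$, the $\limsup$-average equals the ratio of the expected accumulated squared error over one regeneration cycle to the expected cycle length. Because the estimate $\hat X_t$ in \eqref{mse_quant_u} depends only on the most recently delivered sample, this ratio decouples sample-by-sample, so the objective reduces to the per-cycle ratio $\mathbb{E}[\int_{D_i}^{D_{i+1}}(X_t-\hat X_t)^2 dt]/\mathbb{E}[D_{i+1}-D_i]$. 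Minimizing a ratio of this form is, by Dinkelbach's transform, equivalent to finding the value $\beta$ for which $\min_{\pi}\{\mathbb{E}[\int_{D_i}^{D_{i+1}}(X_t-\hat X_t)^2 dt]-\beta\,\mathbb{E}[D_{i+1}-D_i]\}=0$, and that $\beta$ is exactly ${\mathsf{mse}}_{\text{opt-wn}}$. This simultaneously produces \eqref{thm1_eq22} and \eqref{thm1_eq23}, reducing the theorem to solving the inner minimization for a fixed $\beta$.

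Next I would recognize the inner minimization as an optimal stopping problem. The key observation is that, by \eqref{eq_solution} and the strong Markov property, the error $X_t-\hat X_t$ incurred while sample $i$ is in use is a fresh copy of the process $O_t$ in \eqref{eq_signals} (with elapsed time $t-S_i$): the noiseless sample gives $\hat X_{S_i}=X_{S_i}$, and the $\mu$- and mean-reverting terms cancel so that only the Wiener-driven term survives. Since $\mathbb{E}[O_t^2]=\tfrac{\sigma^2}{2\theta}(1-e^{-2\theta t})$ for $\theta\neq0$ and $\sigma^2 t$ for $\theta=0$, the error accumulated while a sample is in transmission averages precisely to the constant $\mathsf{mse}_{Y_i}$ of \eqref{eq_mse_Yi}, which is the forced, uncontrollable part of the cycle. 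The only controllable quantity is the waiting time $S_{i+1}-D_i$, a stopping time of the error process, so the inner minimization becomes (up to additive constants) the minimization of $\mathbb{E}[\int_0^{\tau}(O_t^2-\beta)\,dt]$ over stopping times $\tau$ of the one-dimensional diffusion $O$.

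I would then solve this diffusion optimal stopping problem by a free-boundary/verification argument, which predicts a symmetric rule: continue until $|X_t-\hat X_t|$ first exceeds a level $v(\beta)$, giving exactly the form \eqref{eq_opt_solution}. To compute the level, I would evaluate, for the crossing time $\tau_v=\inf\{t\ge 0:|O_t|\ge v\}$, the two expectations $\mathbb{E}[\tau_v]$ and $\mathbb{E}[\int_0^{\tau_v}O_t^2\,dt]$ by applying Dynkin's formula to polynomial and exponential test functions adapted to the OU generator in each regime. These expectations generate the functions $G$ of \eqref{eq_g2} when $\theta>0$ and $K$ of \eqref{K_x} when $\theta<0$, with a quadratic in the $\theta=0$ case. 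Imposing the stationarity condition of the stopping problem and inverting---which is legitimate since $G$ is strictly increasing and $K$ strictly decreasing, so $G^{-1}$ and $K^{-1}$ exist---yields the three expressions in \eqref{eq_threshold_st}, the $\theta=0$ case collapsing to $v(\beta)=\sqrt{3(\beta-\mathbb{E}[Y_i])}$.

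I expect the main obstacle to be the global optimality (verification) step: showing that the symmetric threshold rule beats \emph{every} causal policy in $\Pi$, not merely those that suspend sampling while the server is busy. This requires constructing the value function that solves the associated free-boundary problem and then using It\^o's formula together with optional stopping to show it lower-bounds the cost of an arbitrary stopping time, while also proving that the root $\beta$ of \eqref{thm1_eq22} exists and is unique via monotonicity of the per-cycle objective in $\beta$. A secondary but genuine technical point is integrability in the unstable case $\theta<0$, where $\mathbb{E}[O_t^2]$ grows without bound: one must verify that the finite threshold keeps $\mathbb{E}[\tau_v]$ and the accumulated error finite and that $K^{-1}$ is applied within its valid domain.
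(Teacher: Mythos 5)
Your high-level route is the right one, and it is essentially the route the paper relies on: the paper does not re-prove Theorem \ref{thm1} in full, but invokes the renewal--reward/Dinkelbach decomposition and the free-boundary optimal stopping analysis of \cite{Ornee_TON} for $\theta>0$, the result of \cite{Wiener_TIT} for $\theta=0$, and remarks that $\theta<0$ follows by the same free-boundary procedure with $K(x)=G(jx)$. Your identification of the regenerative ratio form, the zero-root characterization \eqref{thm1_eq22}--\eqref{thm1_eq23}, the reduction of the error to the process $O$ of \eqref{eq_signals}, the verification obstacle, and the integrability concern for $\theta<0$ all match that program.

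There is, however, one concrete flaw in your reduction step. You claim the in-service error ``averages precisely to the constant $\mathsf{mse}_{Y_i}$'' and that the inner problem is, ``up to additive constants,'' $\min_\tau \mathbb{E}\big[\int_0^{\tau}(O_t^2-\beta)\,dt\big]$. This is not correct: for $t\in[S_{i+1},D_{i+1})$ the estimator still uses sample $i$, so the error during the next service time is the \emph{continuation} of $O_{t-S_i}$ from its value at the stopping time (of magnitude $\max\{v(\beta),|O_{Y_i}|\}$), not a fresh copy started at zero. By the strong Markov property, conditioned on the error value $x$ at $S_{i+1}(\beta)$ and with $\theta\neq 0$,
\begin{align*}
\mathbb{E}\!\left[\int_{S_{i+1}}^{D_{i+1}} O_{t-S_i}^2\,dt \,\Big|\, x\right]
= x^2\,\frac{1-\mathbb{E}[e^{-2\theta Y_{i+1}}]}{2\theta}
+\frac{\sigma^2}{2\theta}\left(\mathbb{E}[Y_{i+1}]-\frac{1-\mathbb{E}[e^{-2\theta Y_{i+1}}]}{2\theta}\right),
\end{align*}
and $x^2\,\mathbb{E}[Y_{i+1}]+\tfrac{\sigma^2}{2}\mathbb{E}[Y_{i+1}^2]$ when $\theta=0$. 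The $x^2$ term is a genuine state-dependent \emph{terminal cost}, not an additive constant, so the correct inner problem is $\min_\tau \mathbb{E}\big[\int_0^{\tau}(O_t^2-\beta)\,dt + c\,O_\tau^2\big]$ (plus constants), with $c=\frac{1-\mathbb{E}[e^{-2\theta Y_{i+1}}]}{2\theta}$. It is exactly this terminal term that injects $\mathsf{mse}_{Y_i}$ and $\mathbb{E}[Y_i]$ into \eqref{eq_threshold_st}; solving the purely running-cost problem you wrote down, smooth fit would give a threshold depending on $(\beta,\sigma,\theta)$ alone, and you could not recover $\sqrt{3(\beta-\mathbb{E}[Y_i])}$ nor the arguments $\big(\tfrac{\sigma^2}{2\theta}-\mathsf{mse}_{Y_i}\big)\big/\big(\tfrac{\sigma^2}{2\theta}-\beta\big)$ of $G^{-1}$ and $K^{-1}$. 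The quantity $\mathsf{mse}_{Y_i}=\mathbb{E}[O_{Y_i}^2]$ is the error variance at delivery of a freshly taken sample, which is why it enters the formulas --- not as the in-service cost of the current cycle. A secondary inaccuracy of the same kind: each cycle begins at $D_i(\beta)$ from the random state $O_{Y_i}$, so the stopping problem must be solved from initial condition $q=O_{Y_i}$ (with immediate stopping when $|q|\geq v(\beta)$), not from $0$; this matters when evaluating the expectations in \eqref{thm1_eq22}. With the terminal cost carried through the free-boundary/verification argument, as in \cite{Ornee_TON}, your outline goes through.
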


In \cite{Ornee_TON}, it is proved that the optimal sampling policy for stable OU process, i.e., when $\theta > 0$ is a threshold policy. The threshold obtained in \cite{Ornee_TON} coincides with $v(\beta)$ in \eqref{eq_threshold_st} for the case of $\theta > 0$. For $\theta = 0$, the threshold is obtained for $\sigma = 1$ which represents a Wiener process \cite{Wiener_TIT}. For $\theta<0$, the proof procedure works in the same way as explained in \cite{Ornee_TON} for stable OU processes. The threshold $v(\beta)$ is obtained by solving similar free boundary problems explained in \cite{Ornee_TON} and the optimality of \eqref{thm1_eq23} for $\theta < 0$ is thus guaranteed. However, the threshold structure is different for all the three cases in Theorem \ref{thm1}. The function $K(x)$ in \eqref{K_x} is related to the function $G(x)$ in \eqref{eq_g2} as follows
\begin{align} \label{GK_relation}
K(x) = G(jx),
\end{align}
where $j$ is the imaginary number represented by $j=\sqrt{-1}$.
Therefore, the threshold $v(\beta)$ for $\theta < 0$ can be expressed by the following equation as well:
\begin{align}
v(\beta) = j^{-1} \frac{\sigma}{\sqrt{-\theta}} G^{-1} \left(\frac{\frac{\sigma^2}{2 \theta} - \mathsf{mse}_{Y_i}}{\frac{\sigma^2}{2 \theta} - \beta}\right).
\end{align}


Though the threshold functions $v(\beta)$ varies with signal structure, the computation of the parameter $\beta$ remains the same for all cases and the uniqueness of the root of \eqref{thm1_eq22} is proved in \cite{Ornee_TON}. The decision of taking a new sample defined in \eqref{eq_opt_solution} works in the same way as explained in \cite{Ornee_TON}. 

\subsection{Upper Bounds for \emph{$\mathsf{mse}$}}

Suppose that the additive noise in the sampler and channel exist in the system, i.e., $N_{S_i} \neq 0, N'_{S_i} \neq 0$. Moreover, the sampler follows the sampling strategy obtained in \eqref{eq_opt_solution}. The OU process $O_t$ is a Gauss-Markov process. When noises get incorporated with $O_t$, it does not remain Markov. The analysis presented in our previous study \cite{Ornee_TON} was based on the strong Markov property of the OU processes. Due to the non-Markovian structure of noisy samples of OU processes, finding an optimal sampling policy requires different analytical tools. Due to lack of space, we do not provide the optimal sampling policy with the presence of noise, but it will be considered in our future study. 

Because the noises $N_{S_i}$ and $N'_{S_i}$ are independent of the sampling times and the observed OU process, by utilizing \eqref{quant_sample}, \eqref{channel_noise}, \eqref{eq_solution}, \eqref{mse_quant_u}, and \eqref{thm1_eq23}, the $\mathsf{mse}$ at the estimator which is an upper bound of \eqref{mse_new} can be expressed as
\begin{align} \label{mse_quant1}
 \mathsf{mse} = & \frac{\mathbb{E}\left[\int_{D_i(\beta)}^{D_{i+1}(\beta)}\! (X_t-\hat X_t)^2dt\right]}{\mathbb{E}[D_{i+1}(\beta)\!-\!D_i(\beta)]} \nonumber\\
 =& \frac{\mathbb{E}\left[\int_{D_i(\beta)}^{D_{i+1}(\beta)}\! (O_{t-{S_i}} - (N_{S_i} + N'_{S_i}) e^{-\theta (t-S_i)})^2 dt\right]}{\mathbb{E}[D_{i+1}(\beta)\!-\!D_i(\beta)]} \nonumber\\
 =& \frac{\mathbb{E}\left[\int_{D_i(\beta)}^{D_{i+1}(\beta)}\! O^{2}_{t-{S_i}} dt\right]}{\mathbb{E}[D_{i+1}(\beta)\!-\!D_i(\beta)]} \nonumber\\
& +\frac{\mathbb{E}\left[\int_{D_i(\beta)}^{D_{i+1}(\beta)}\! (N_{S_i} + N'_{S_i})^2 e^{-2 \theta (t-S_i)} dt\right]}{\mathbb{E}[D_{i+1}(\beta)\!-\!D_i(\beta)]}, 
 \end{align}
 where \eqref{mse_quant1} follows due to the fact that the OU process $O_t$ has initial state $O_0 = 0$ and the noises $N_{S_i}$ and $N'_{S_i}$ with zero mean are independent of the observed OU process and sampling times.
 
To compute \eqref{mse_quant1}, the first fractional term remains the same as the $\mathsf{mse}_{\text{opt-wn}}$ in \eqref{thm1_eq23} with $N_{S_i} = 0$ and $N'_{S_i}=0$. For stable OU processes, the associated $\mathsf{mse}_{\text{opt-wn}}$ is computed in \cite[Lemma 1] {Ornee_TON}. The expression of $O^2_{t-Si}$ is the same for both stable and unstable OU processes. Therefore, the solution for \eqref{mse_quant1} holds for all three cases in \eqref{eq_signals}. For computing the second term, as $N_{S_i}$ and $N'_{S_i}$ are independent of the observed OU process and the sampling times, the numerator of the second fractional term in \eqref{mse_quant1} can be written as:
\begin{align}
& {\mathbb{E}\left[\int_{D_i(\beta)}^{D_{i+1}(\beta)}\! (N_{S_i} + N'_{S_i})^2 e^{-2 \theta (t-S_i)} dt\right]} \nonumber\\
=& \mathbb{E} [(N_{S_i} + N'_{S_i})^2] {\mathbb{E}\left[\int_{D_i(\beta)}^{D_{i+1}(\beta)}e^{-2 \theta (t-S_i)} dt\right]}. \label{new_lemma}
\end{align}
Then, we have the following lemma for the last term in \eqref{new_lemma}.

\begin{lemma} \label{lemma_W(v)}
It holds that
\begin{align} \label{W_v}
& {\mathbb{E}\left[\int_{D_i(\beta)}^{D_{i+1}(\beta)} e^{-2 \theta (t-S_i)} dt\right]} \nonumber\\
=& \!\! \frac{1}{2 \theta} \mathbb{E} \!\! \bigg[\! e^{-2 \theta Y_i} \bigg\{1 - {\min} \bigg(1, \frac{{} {_1F_1} \big(1, \frac{1}{2}, \frac{\theta}{\sigma^2} O^{2}_{Y_i}\big)}{{} {_1F_1} \big(1, \frac{1}{2}, \frac{\theta}{\sigma^2} v^{2} (\beta) \big)}\bigg) \mathbb{E} [e^{-2 \theta Y_{i+1}}] \bigg\} \!\! \bigg]. \!\!
\end{align}
\end{lemma}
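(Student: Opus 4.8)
The plan is to reduce the whole computation to the Laplace transform of a two-sided exit time of the Ornstein-Uhlenbeck process. First I would change the integration variable to $\tau = t - S_i$. Since the estimator uses the sample $X_{S_i}$ throughout $[D_i(\beta),D_{i+1}(\beta))$, the instantaneous error equals $X_t - \hat X_t = O_{t-S_i}$, a zero-mean OU excursion as in \eqref{eq_signals} restarted at $t=S_i$ by the strong Markov property. Writing $\gamma := S_{i+1}(\beta) - S_i(\beta)$ and using $D_i - S_i = Y_i$, $D_{i+1} - S_i = \gamma + Y_{i+1}$, the integral becomes the deterministic quantity $\int_{Y_i}^{\gamma+Y_{i+1}} e^{-2\theta\tau}\,d\tau = \frac{1}{2\theta}\big(e^{-2\theta Y_i} - e^{-2\theta\gamma}e^{-2\theta Y_{i+1}}\big)$. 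Taking expectations and using that the fresh service time $Y_{i+1}$ is independent of everything generated up to $S_{i+1}$ (hence of $\gamma$ and $Y_i$) yields $\frac{1}{2\theta}\big(\mathbb{E}[e^{-2\theta Y_i}] - \mathbb{E}[e^{-2\theta\gamma}]\,\mathbb{E}[e^{-2\theta Y_{i+1}}]\big)$, so everything hinges on evaluating $\mathbb{E}[e^{-2\theta\gamma}]$.

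Next I would analyze $\gamma$. Because the threshold in \eqref{eq_opt_solution} is only tested after $D_i$, we have $\gamma = \inf\{\tau \ge Y_i : |O_\tau| \ge v(\beta)\}$, i.e. $\gamma = Y_i + T$, where, conditioned on $\mathcal{F}_{D_i}$, $T$ is the first exit time of the OU process started at the state $O_{Y_i}$ from the interval $(-v(\beta),v(\beta))$ (and $T=0$ when $|O_{Y_i}|\ge v(\beta)$, since at $D_i$ the error may already have crossed the threshold while the server was busy). Conditioning on $\mathcal{F}_{D_i}$ and using time-homogeneity gives $\mathbb{E}[e^{-2\theta\gamma}\mid\mathcal{F}_{D_i}] = e^{-2\theta Y_i}\,g(O_{Y_i})$, where $g(x) = \mathbb{E}_x[e^{-2\theta T}]$.

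The core step is computing $g$. I would characterize $g$ as the bounded solution of the boundary value problem $\mathcal{L}g = 2\theta g$ on $(-v(\beta),v(\beta))$ with $g(\pm v(\beta)) = 1$, where $\mathcal{L}f = \frac{\sigma^2}{2}f'' - \theta x f'$ is the OU generator; the identification of this ODE solution with the Laplace transform is justified by Dynkin's formula, since $e^{-2\theta(s\wedge T)}g(O_{s\wedge T})$ is then a bounded martingale and optional stopping gives $g(x) = \mathbb{E}_x[e^{-2\theta T}]$. Substituting $z = \frac{\theta}{\sigma^2}x^2$ turns the even solution into Kummer's confluent hypergeometric equation with $b=\frac12$ and $a = \frac{2\theta}{2\theta} = 1$ (the value $a=1$ coming from matching the Laplace exponent $2\theta$ against the drift), so the even bounded solution is ${}_1F_1(1,\frac12,\frac{\theta}{\sigma^2}x^2)$. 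Normalizing to meet the boundary condition gives $g(x) = \frac{{}_1F_1(1,\frac12,\frac{\theta}{\sigma^2}x^2)}{{}_1F_1(1,\frac12,\frac{\theta}{\sigma^2}v^2(\beta))}$ for $|x|\le v(\beta)$; since ${}_1F_1(1,\frac12,\cdot)$ is increasing, this ratio is $\le 1$ precisely when $|x|\le v(\beta)$, so the interior and boundary regimes of $g$ merge into the single factor $\min\!\left(1,\frac{{}_1F_1(1,\frac12,\frac{\theta}{\sigma^2}O_{Y_i}^2)}{{}_1F_1(1,\frac12,\frac{\theta}{\sigma^2}v^2(\beta))}\right)$.

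Finally I would assemble: $\mathbb{E}[e^{-2\theta\gamma}] = \mathbb{E}\!\left[e^{-2\theta Y_i}\min(1,\cdots)\right]$, substitute into the reduced expression, and pull the constant $\mathbb{E}[e^{-2\theta Y_{i+1}}]$ inside the outer expectation to recover exactly \eqref{W_v}. I expect the main obstacle to be the core step, namely rigorously identifying the exit-time Laplace transform with the confluent hypergeometric ratio: solving the Kummer ODE for the correct (even, bounded) branch and discharging the martingale/optional-stopping argument, including the integrability needed for optional stopping at $T$. The secondary place requiring care is the conditioning bookkeeping, i.e. choosing $\sigma$-fields so that $Y_{i+1}$ stays independent of $(\gamma,Y_i)$ while the strong Markov property is applied cleanly at $D_i$.
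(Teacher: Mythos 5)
Your proposal is correct and takes essentially the same route as the paper's proof: the paper likewise reduces the integral to $\frac{1}{2\theta}e^{-2\theta Y_i}\big(1-e^{-2\theta Z_i}e^{-2\theta Y_{i+1}}\big)$ (your $\gamma = Y_i + Z_i$), splits on whether $|O_{Y_i}|\geq v(\beta)$ (your $T=0$ boundary regime, which the $\min$ unifies in exactly the same way), and factors out $\mathbb{E}[e^{-2\theta Y_{i+1}}]$ by independence of the fresh service time, using conditional independence of $Z_i$ and $Y_i$ given $O_{Y_i}$. The only divergence is at your ``core step'': the paper obtains $\mathbb{E}[e^{-2\theta Z_i}\mid O_{Y_i}=q] = {}_1F_1\big(1,\tfrac12,\tfrac{\theta}{\sigma^2}q^2\big)/{}_1F_1\big(1,\tfrac12,\tfrac{\theta}{\sigma^2}v^2(\beta)\big)$ by directly citing \cite[Eq. 15a]{Khem2019}, whereas you re-derive it from the Kummer boundary-value problem $\tfrac{\sigma^2}{2}g''-\theta x g' = 2\theta g$, $g(\pm v(\beta))=1$, via Dynkin and optional stopping; this is a sound, self-contained substitute, and the integrability concern you flag is harmless for $\theta>0$ since $g$ is bounded on $[-v(\beta),v(\beta)]$ and $e^{-2\theta(s\wedge T)}\leq 1$, so the stopped process is a bounded martingale and bounded convergence applies.
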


\begin{proof}
See Appendix \ref{lemma_proof}.
\end{proof}

By using Lemma \ref{lemma_W(v)} and the expressions obtained in \cite[Lemma 1]{Ornee_TON}, all the associated expectations in \eqref{mse_quant1} can be obtained by Monte Carlo simulations of scalar random variables $O_{Y_i}$ and $Y_i$, which does not require to directly simulate the entire random process $\{O_t, t \geq 0\}$.

\section{Numerical Results}\label{sec_num}

\begin{figure}
	\centering
	\includegraphics[width=0.5\textwidth]{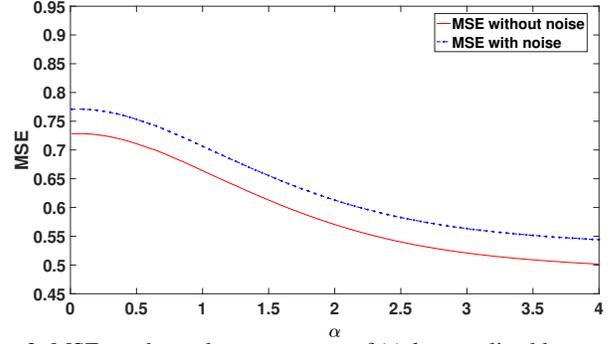}   
	\caption{MSE vs. the scale parameter ${\alpha}$ of \emph{i.i.d.} normalized log-normal service time distribution with $\mathbb{E}[Y_i]=1$, where the parameters of the OU process are $\sigma=1$ and $\theta=0.5$.}
	\label{noise_lognormal}
	\vspace{-3mm}
\end{figure}

Figure \ref{noise_lognormal} illustrates the MSE of \emph{i.i.d} normalized log-normal service time, where $Y_i = {e^{{\alpha} {X_i}}}/{\mathbb{E}[e^{{\alpha} {X_i}}]},$ and ${\alpha}>0$ is the scale parameter of log-normal distribution. The $(X_1, X_2, \dots)$ are \emph{i.i.d.} Gaussian random variables, where $\mathbb{E} [X_i] = 0$ and $\text{Var} (X_i) = 1$. The maximum throughput of the queue is 1 as $\mathbb{E}[Y_i]=1$. Both of the noises $N_{S_i}$ and $N'_{S_i}$ are considered to have 0 mean and variance 0.1.
With the growth of the scale parameter $\alpha$, the tail of the log-normal distribution becomes heavier. The MSE with noise curve shows performance degradation as the additional term due to noise added with the $\mathsf{mse}$ without noise.
 \section{Conclusion} \label{conclusion}
In this paper, we have explained the optimal sampling strategies for minimizing the \emph{instantaneous estimation error} for three different cases of scalar Gauss-Markov signal processes. The optimal sampler exhibits a threshold policy and by using causal knowledge of the signal values, a smaller estimation error has been obtained. The optimal threshold has been changed with signal structure. For noisy samples, the additional term added in the $\mathsf{mse}$ due to noise is found. An optimal sampler design for noisy samples of Gauss-Markov processes will be considered in our future study.

\bibliographystyle{IEEEtran}
\bibliography{ref,ref1,ref_2,sueh}

\begin{thebibliography}{10}
\providecommand{\url}[1]{#1}
\csname url@samestyle\endcsname
\providecommand{\newblock}{\relax}
\providecommand{\bibinfo}[2]{#2}
\providecommand{\BIBentrySTDinterwordspacing}{\spaceskip=0pt\relax}
\providecommand{\BIBentryALTinterwordstretchfactor}{4}
\providecommand{\BIBentryALTinterwordspacing}{\spaceskip=\fontdimen2\font plus
\BIBentryALTinterwordstretchfactor\fontdimen3\font minus
  \fontdimen4\font\relax}
\providecommand{\BIBforeignlanguage}[2]{{%
\expandafter\ifx\csname l@#1\endcsname\relax
\typeout{** WARNING: IEEEtran.bst: No hyphenation pattern has been}%
\typeout{** loaded for the language `#1'. Using the pattern for}%
\typeout{** the default language instead.}%
\else
\language=\csname l@#1\endcsname
\fi
#2}}
\providecommand{\BIBdecl}{\relax}
\BIBdecl

\bibitem{Ornee_TON}
T.~Z. Ornee and Y.~Sun, ``Sampling and remote estimation for the
  {Ornstein-Uhlenbeck} processes through queues: Age of information and
  beyond,'' 2020, accepted by \emph{IEEE Transactions on Networking}.

\bibitem{Wiener_TIT}
Y.~{Sun}, Y.~{Polyanskiy}, and E.~{Uysal}, ``Sampling of the wiener process for
  remote estimation over a channel with random delay,'' \emph{IEEE Trans. Inf.
  Theory}, vol.~66, no.~2, pp. 1118--1135, 2020.

\bibitem{app_un}
O.~E. Barndorff-Nielsen and N.~Shephard, \emph{Modelling by Levy processes for
  financial econometrics}.\hskip 1em plus 0.5em minus 0.4em\relax Birkhauser,
  2001, pp. 283--318.

\bibitem{PhysRev.36.823}
G.~E. Uhlenbeck and L.~S. Ornstein, ``On the theory of the {Brownian} motion,''
  \emph{Phys. Rev.}, vol.~36, pp. 823--841, Sept. 1930.

\bibitem{Doob1942}
J.~L. Doob, ``The {Brownian} movement and stochastic equations,'' \emph{Annals
  of Mathematics}, vol.~43, no.~2, pp. 351--369, 1942.

\bibitem{Nuno2011}
G.~M. Lipsa and N.~C. Martins, ``Remote state estimation with communication
  costs for first-order {LTI} systems,'' \emph{IEEE Trans.~Auto.~Control},
  vol.~56, no.~9, pp. 2013--2025, Sept. 2011.

\bibitem{Hajek2008}
B.~Hajek, K.~Mitzel, and S.~Yang, ``Paging and registration in cellular
  networks: Jointly optimal policies and an iterative algorithm,'' \emph{IEEE
  Trans.~Inf. Theory}, vol.~54, no.~2, pp. 608--622, Feb 2008.

\bibitem{Rabi2012}
M.~Rabi, G.~V. Moustakides, and J.~S. Baras, ``Adaptive sampling for linear
  state estimation,'' \emph{SIAM Journal on Control and Optimization}, vol.~50,
  no.~2, pp. 672--702, 2012.

\bibitem{nayyar2013}
A.~Nayyar, T.~Ba\c{s}ar, D.~Teneketzis, and V.~V. Veeravalli, ``Optimal
  strategies for communication and remote estimation with an energy harvesting
  sensor,'' \emph{IEEE Trans.~Auto.~Control}, vol.~58, no.~9, pp. 2246--2260,
  Sept. 2013.

\bibitem{Basar2014}
K.~Nar and T.~Ba\c{s}ar, ``Sampling multidimensional {Wiener} processes,'' in
  \emph{IEEE CDC}, Dec. 2014, pp. 3426--3431.

\bibitem{GAO201857}
X.~Gao, E.~Akyol, and T.~Ba\c{s}ar, ``Optimal communication scheduling and
  remote estimation over an additive noise channel,'' \emph{Automatica},
  vol.~88, pp. 57 -- 69, 2018.

\bibitem{ChakravortyTAC2020}
J.~{Chakravorty} and A.~{Mahajan}, ``Remote estimation over a packet-drop
  channel with {Markovian} state,'' \emph{IEEE Trans. Auto. Control}, vol.~65,
  no.~5, pp. 2016--2031, 2020.

\bibitem{GuoISIT2020}
N.~Guo and V.~Kostina, ``Optimal causal rate-constrained sampling for a class
  of continuous {Markov} processes,'' in \emph{IEEE ISIT}, 2020.

\bibitem{TsaiINFOCOM2020}
C.-H. Tsai and C.-C. Wang, ``Unifying {AoI} minimization and remote estimation:
  Optimal sensor/controller coordination with random two-way delay,'' in
  \emph{IEEE INFOCOM}, 2020.

\bibitem{Arafa_2020}
A.~Arafa, K.~Banawan, K.~G. Seddik, and H.~Vincent~Poor, ``Timely estimation
  using coded quantized samples,'' in \emph{IEEE ISIT}, 2020, pp. 1812--1817.

\bibitem{jog2019channels}
V.~Jog, R.~J. La, and N.~C. Martins, ``Channels, learning, queueing and remote
  estimation systems with a utilization-dependent component,'' 2019, coRR,
  abs/1905.04362.

\bibitem{Durrettbook10}
R.~Durrett, \emph{Probability: Theory and Examples}, 4th~ed.\hskip 1em plus
  0.5em minus 0.4em\relax Cambridge Univerisity Press, 2010.

\bibitem{Maller2009}
R.~A. Maller, G.~M{\"u}ller, and A.~Szimayer, ``{Ornstein-Uhlenbeck} processes
  and extensions,'' in \emph{Handbook of Financial Time Series}, T.~Mikosch,
  J.-P. Krei{\ss}, R.~A. Davis, and T.~G. Andersen, Eds.\hskip 1em plus 0.5em
  minus 0.4em\relax Berlin, Heidelberg: Springer Berlin Heidelberg, 2009, pp.
  421--437.

\bibitem{Peskir2006}
G.~Peskir and A.~N. Shiryaev, \emph{Optimal Stopping and Free-Boundary
  Problems}.\hskip 1em plus 0.5em minus 0.4em\relax Basel, Switzerland:
  Birkh\"{a}uswer Verlag, 2006.

\bibitem{SOLEYMANI20161}
T.~Soleymani, S.~Hirche, and J.~S. Baras, ``Optimal information control in
  cyber-physical systems,'' \emph{IFAC-PapersOnLine}, vol.~49, no.~22, pp. 1 --
  6, 2016.

\bibitem{Khem2019}
K.~R. {Ghusinga}, V.~{Srivastava}, and A.~{Singh}, ``Driving an
  ornstein-uhlenbeck process to desired first-passage time statistics,'' in
  \emph{18th European Control Conference (ECC)}, 2019, pp. 869--874.

\end{thebibliography}
\appendices
\section{Proof of Lemma \ref{lemma_W(v)}} \label{lemma_proof}

In order to prove Lemma \ref{lemma_W(v)}, we need to consider the following two cases:

\emph{Case 1:} If $|X_{D_i (\beta)} - \hat X_{D_i (\beta)}| = |O_{Y_i}| \geq v(\beta)$, then $S_{i+1} (\beta) = D_i (\beta)$. Hence, 
\begin{align}\label{eq_expectation_12}
D_{i} (\beta) &= S_{i}(\beta) + Y_{i}, \\
D_{i+1} (\beta)& = S_{i+1}(\beta) + Y_{i+1} = D_i (\beta) + Y_{i+1}.
\end{align}
Let us consider the following equation:
\begin{align}
& {\mathbb{E}\left[\int_{D_i(\beta)}^{D_{i+1}(\beta)} e^{-2 \theta (t-S_i)} dt \Big| O_{Y_i} = q, Y_i =y, |O_{Y_i}| \geq v(\beta)
\right]} \nonumber\\
=& \mathbb{E} \left[\int_{Y_i}^{Y_i + Y_{i+1}} e^{-2 \theta s} ds \Big| O_{Y_i} = q, Y_i =y, |O_{Y_i}| \geq v(\beta)
\right] \nonumber\\
=& \mathbb{E} \bigg[ \frac{1}{2 \theta} e^{-2 \theta y} (1 - e^{-2 \theta Y_{i+1}}) \big| O_{Y_i} = q, Y_i =y, |O_{Y_i}| \geq v(\beta) \bigg] \nonumber\\
=& \frac{1}{2 \theta} e^{-2 \theta y} \mathbb{E} \bigg[1 - e^{-2 \theta Y_{i+1}} \big| O_{Y_i} = q, Y_i =y, |O_{Y_i}| \geq v(\beta) \bigg] \nonumber\\
=& \frac{1}{2 \theta} e^{-2 \theta y} \big\{1 - \mathbb{E} [e^{-2 \theta Y_{i+1}}]\big\}, \label{lem1_eq1}
\end{align}
where \eqref{lem1_eq1} holds due to the fact that $Y_{i+1}$ is independent of $O_{Y_i}$ and $Y_i$.
\\

\emph{Case 2:} If $|X_{D_i(\beta)} - \hat X_{D_i(\beta)} |= |O_{Y_i}| < v(\beta)$, then 
\begin{align}
& {\mathbb{E}\left[\int_{D_i(\beta)}^{D_{i+1}(\beta)} e^{-2 \theta (t-S_i)} dt \Big| O_{Y_i} = q, Y_i =y, |O_{Y_i}| < v(\beta)
\right]} \nonumber\\
=& \mathbb{E} \left[\int_{Y_i}^{Y_i + Z_i+ Y_{i+1}} e^{-2 \theta s} ds \Big| O_{Y_i} = q, Y_i =y, |O_{Y_i}| < v(\beta)
\right] \nonumber\\
=& \mathbb{E} \bigg[ \frac{1}{2 \theta} e^{-2 \theta y} (1 - e^{-2 \theta Z_i} e^{-2 \theta Y_{i+1}}) 
\big| O_{Y_i} = q, Y_i =y \bigg] \nonumber\\
=& \frac{1}{2 \theta} e^{-2 \theta y} \mathbb{E} \bigg[1 - e^{-2 \theta Z_i} e^{-2 \theta Y_{i+1}} \big| O_{Y_i} = q, Y_i =y \bigg] \nonumber\\
=& \frac{1}{2 \theta} e^{-2 \theta y} \bigg\{1 - \mathbb{E} \bigg[e^{-2 \theta Z_i} \Big| O_{Y_i} = q, Y_i =y \bigg] \mathbb{E} [e^{-2 \theta Y_{i+1}}]\bigg\} \nonumber\\
=& \frac{1}{2 \theta} e^{-2 \theta y} \bigg\{1 - \mathbb{E} [e^{-2 \theta Z_i} | O_{Y_i} = q] \mathbb{E}[ e^{-2 \theta Y_{i+1}}] \bigg\}, \label{lem1_eq2} 
\end{align}
where the last equation in \eqref{lem1_eq2} holds because $Z_i$ is conditionally independent of $Y_i$ given $O_{Y_i}$. Next, we need to compute $\mathbb{E} [e^{-2 \theta Z_i} | O_{Y_i} = q]$, where $Z_i$ is a hitting time of the time-shifted OU process $O_{t+Y_i}$ given as
\begin{align}
& Z_i = \nonumber\\
& \text{inf} \{t : O_{t + Y_i} \not\in (-v(\beta), v(\beta)) | O_{Y_i} = q \in (-v(\beta), v(\beta))\}.
\end{align}
By using the characteristic function of the hitting time of the OU process in \cite[Eq. 15a] {Khem2019}, we get that
\begin{align}
\mathbb{E} [e^{-2 \theta Z_i} | O_{Y_i} = q] = \frac{{} {_1F_1} \big(1, \frac{1}{2}, \frac{\theta}{\sigma^2} q^2 \big)}{{} {_1F_1} \big(1, \frac{1}{2}, \frac{\theta}{\sigma^2} v^{2} (\beta) \big)}.
\end{align}
Therefore, \eqref{lem1_eq2} becomes
\begin{align}
& {\mathbb{E}\left[\int_{D_i(\beta)}^{D_{i+1}(\beta)} e^{-2 \theta (t-S_i)} dt \Big| O_{Y_i} = q, Y_i =y, |O_{Y_i}| < v(\beta)
\right]} \nonumber\\
=& \frac{1}{2 \theta} e^{-2 \theta y} \bigg\{1 - \frac{{} {_1F_1} \big(1, \frac{1}{2}, \frac{\theta}{\sigma^2} q^2 \big)}{{} {_1F_1} \big(1, \frac{1}{2}, \frac{\theta}{\sigma^2} v^{2} (\beta) \big)} \mathbb{E}[ e^{-2 \theta Y_{i+1}}] \bigg\}. \label{lem1_eq3}
\end{align}
By combining \eqref{lem1_eq1} and \eqref{lem1_eq3}, we get that
\begin{align}
& {\mathbb{E}\left[\int_{D_i(\beta)}^{D_{i+1}(\beta)} e^{-2 \theta (t-S_i)} dt \Big| O_{Y_i} = q, Y_i =y
\right]} \nonumber\\
=& \frac{1}{2 \theta} e^{-2 \theta y} \bigg[ 1 - {\min} \bigg\{ 1, \frac{{} {_1F_1} \big(1, \frac{1}{2}, \frac{\theta}{\sigma^2} q^2 \big)}{{} {_1F_1} \big(1, \frac{1}{2}, \frac{\theta}{\sigma^2} v^{2} (\beta) \big)} \bigg\} \mathbb{E} [e^{-2 \theta Y_{i+1}}] \bigg]. \label{lem1_eq4}
\end{align}
Finally, by taking the expectation over $O_{Y_i}$ and $Y_i$ in \eqref{lem1_eq4}, Lemma \ref{lemma_W(v)} is proven.

\end{document}